\newcommand*{\R}{\mathbb{R}}
\newcommand*{\C}{\mathbb{C}}
\theoremstyle{plain}
\newtheorem{theo}{Theorem}[section]
\newtheorem{propo}[theo]{Proposition}
\newtheorem{remark}{Remark}
\theoremstyle{definition}
\def\ge{g_\text e} 
\def\gen{g_{{\rm e},n}}
\def\ve{v_\text e}
\def\d{\text d}
\def\bb{\langle b\rangle}
\def\bia{\langle 1/a\rangle}
\def\fgn{f_n^g}
\def\fvn{f_n^v}
\def\fdvn{f_n^{\Delta v}}
\def\fg{f_g}
\def\fv{f_v}
\def\fdv{f_{\Delta v}}
\def\tmin{T_{\rm min}}
\def\tmax{T_{\rm max}}
\begin{document}
\begin{frontmatter}

\title{Stability of heterogeneous linear and nonlinear\\ car-following models}


\author[ACM,IMACM]{Matthias Ehrhardt\corref{Corr}}
\cortext[Corr]{Corresponding author}
\ead{ehrhardt@uni-wuppertal.de}



\author[TSR,IMACM]{Antoine Tordeux}
\ead{tordeux@uni-wuppertal.de}


\address[ACM]{University of Wuppertal, Chair of Applied and Computational Mathematics,\\
Gaußstrasse 20, 42119 Wuppertal, Germany}

\address[TSR]{University of Wuppertal, Chair of Traffic Safety and Reliability,\\
Gaußstrasse 20, 42119 Wuppertal, Germany}

\address[IMACM]{Institute for Mathematical Modelling, Analysis and Computational Mathematics,\\
Gaußstrasse 20, 42119 Wuppertal, Germany}

\begin{abstract}
 Stop-and-go waves in road traffic are complex collective phenomena with significant implications for traffic engineering, safety and the environment. 
 Despite decades of research, understanding and controlling these dynamics remains challenging. 
 This article examines two classes of heterogeneous car-following models with quenched disorder to shed light on the underlying mechanisms that drive traffic instability and stop-and-go dynamics.
 Specifically, a scaled heterogeneity model and an additive heterogeneity model are investigated, each of which affects the stability of linear and nonlinear car-following models differently. 
 We derive general linear stability conditions which we apply to specific models and illustrate by simulation.
 The study provides insights into the role of individual heterogeneity in vehicle behaviour and its influence on traffic stability.
\end{abstract}

\begin{keyword}
car-following model \sep quenched disorder \sep collective stability \sep stop-and-go waves.\\
\textit{2020 Mathematics Subject Classification:} 76A30, 82C22.
\end{keyword}

\journal{Franklin Open}


\end{frontmatter}

\section{Introduction}

Stop-and-go waves in road traffic are fascinating collective phenomena. 
Stop-and-go waves and traffic instability are observed daily around the world. They have also been observed in experiments \cite{sugiyama2008traffic, nakayama2009metastability, tadaki2013phase}, both with human drivers and with automated driver assistance systems (adaptive cruise control systems) \cite{stern_DissipationStopandgoWaves_2018, gunter2020commercially, makridis2021openacc}.
Even deep learning approaches are being developed to dissipate stop-and-go waves \cite{kreidieh2018dissipating, jiang2021dampen}.
In addition to the scientific interest in traffic engineering, stop-and-go traffic dynamics pose important challenges for road safety and the environment. 
Indeed, stop-and-go dynamics lead to excessive fuel consumption and pollutant emissions compared to a uniform flow of traffic \cite{andre2000driving,aguilera2014new, stern_DissipationStopandgoWaves_2018, stern2019quantifying}.

Stop-and-go waves and traffic instability are classically addressed in traffic engineering using microscopic car-following models. 
Pioneering work in these areas dates back to the studies of Reuschel and Pipes early 1950s \cite{Reuschel1950fahrzeugbewegungen, pipes_OperationalAnalysisTraffic_1953}, Chandler, Herman and co-authors late 1950s with linear models \cite{kometani1958stability, chandler1958traffic, herman_TrafficDynamicsAnalysis_1959}, and, later, Bando, Jiang, Treiber and co-authors with non-linear models \cite{bando_DynamicalModelTraffic_1995, bando1998analysis, jiang_FullVelocityDifference_2001, treiber_CongestedTrafficStates_2000, tordeux2010adaptive}.
We refer the interested reader to \cite{wilson_CarfollowingModelsFifty_2011, cordes_ModelingStopandGoWaves_2020, cordes2023single} for a review.

Three main factors have been identified as potential causes of stop-and-go waves:
\begin{enumerate}
    \item {\bf Delay and relaxation}. 
    In fact, the introduction of a delay or relaxation in the dynamics of most car-following models leads to stability breaking. Such a feature has been noted since the 1950-1960s with linear and nonlinear models \cite{kometani1958stability, chandler1958traffic, gazis1961nonlinear, newell1961nonlinear} and is now a consensus in traffic engineering \cite{nagatani1998delay, orosz2004global, orosz2010traffic, tordeux_LinearStabilityAnalysis_2012, tordeux2018traffic}.
    \item {\bf Stochastic noise}. 
    The first studies in this field were carried out using discrete cellular automata \cite{BarlovicSSS98, Ke-Ping_2004,kaupuvzs2005zero, huang2018instability, schadschneider_StochasticTransportComplex_2010}. 
    More recently, continuous approaches are based on Langevin and Ornstein-Uhlenbeck processes \cite{tomer2000presence, wagner2011time, treiber2009hamilton, treiber2017intelligent, Tordeux2016b, ngoduy2019langevin, wang2020stability, friesen2021spontaneous, ackermann_tordeux24, Ehrhardt2024}. 
    \item {\bf Individual heterogeneity}. 
    Recent approaches show that heterogeneity in vehicle behaviour can induce stability breaking or, conversely, improve the stability of vehicle trains \cite{stern_DissipationStopandgoWaves_2018}. 
    Heterogeneity models can be distinguished between static models lying in the agent characteristics (quenched disorder), and dynamic heterogeneity models acting in the interactions (annealed disorder) \cite{krusemann2014first,tateishi2020quenched, khelfa2022heterogeneity}.
    A general linear stability condition for heterogeneous car-following models with quenched disorder is given in \cite{ngoduy2015effect}.
\end{enumerate}
Despite the large number of studies, understanding and controlling stop-and-go in road traffic flow remains challenging and still nowadays an active area of research.
In particular, the role of heterogeneity and non-linearity in the shape of the model remains poorly understood.


In this article, we investigate two classes of heterogeneous car-following models with quenched disorder: an scaled heterogeneity model where a vehicle-specific factor affects the car-following model and an additive heterogeneity model, where, similar to a noisy model, an individual bias is added to the model. 
If the scaled heterogeneity model can affect the stability of both linear and non-linear car-following models, 
additive heterogeneity terms can disturb the stability of nonlinear models; this is not the case for linear models. 
The \textit{full velocity difference} (FVD) model \cite{jiang_FullVelocityDifference_2001} and the \textit{adaptive time gap} (ATG) model are used as reference linear and nonlinear models, respectively.

The paper is structured as follows.
Next, we present the system setup, notations and a summary of the main results. 
The car-following models, including homogeneous, heterogeneous, linear and nonlinear models, are defined in section~\ref{sec:DefMod}. 
The linear stability conditions are derived in section~\ref{sec:LinStab}.
Some simulations illustrate the results in section~\ref{sec:Sim}. 

\subsection{Setup and notations}
In the sequel, we will consider $N=3,4,\dots$ vehicles of \textit{length} $\ell>0$  on a segment of length $L>N\ell$ with periodic boundaries, i.e.\ a circular road.
We denote by $(x_n(t))_{n=1}^N$ the \textit{positions} of the vehicles at time $t\geq0$ and assume that the vehicles are initially ordered by their index $n$, i.e.,
\begin{equation*}
   0\le x_1(0)\le x_2(0)\le \dots \le x_N(0) \le L.
\end{equation*}
We assume that the predecessor of the $n$th vehicle is always the ($n+1$)th vehicle, 
while the predecessor of the $N$th vehicle is the first vehicle due to the periodic boundaries. 
The \textit{distance gaps} to the predecessors are the variables
\begin{equation}
\begin{cases}
    &g_n(t)= x_{n+1}(t)-x_n(t)-\ell,\qquad n\in\{1,\dots,N-1\},\\
    &g_N(t)= L+x_{1}(t)-x_N(t)-\ell.
\end{cases}
\end{equation}
The \textit{speeds} of the vehicles at time $t\geq0$ are denoted by $(v_n(t))_{n=1}^N$, 
where $v_n(t)=\d x_n(t)/\d t = \dot{x}_n(t)$ for all $n\in\{1,\dots,N\}$. 
The \textit{speed differences} with the predecessors are given by
\begin{equation}
\begin{cases}
    &\Delta v_n(t)=v_n(t)-v_{n+1}(t),\qquad n\in\{1,\dots,N-1\},\\
    &\Delta v_N(t)=v_N(t)-v_{1}(t).
\end{cases}
\end{equation}

\subsection{Summary of the results}

We consider a homogeneous car-following model, defined by some universal acceleration function $F$,
and the equilibrium 
gap $\ge$ and the unique equilibrium speed $\ve$: 
\begin{equation}\label{eq:CFsumm}
    \dot{v}_n(t) = F\bigl(g_n(t),v_n(t),\Delta v_n(t)\bigr), \qquad \ge=:L/N-\ell>0,
    \qquad\exists!\, \ve\in\R,\;F(\ge,\ve,0)=0,
\end{equation}
 and analyze the linear stability of two types of heterogeneous dynamics, namely
\begin{enumerate}
    \item {\bf Scaled heterogeneity model}
\begin{equation}\label{eq:HCF1summ}
    \dot{v}_n(t) = a_n F\bigl(g_n(t),v_n(t),\Delta v_n(t)\bigr),\qquad a_n>0.
\end{equation}
    \item {\bf Additive heterogeneity model}
\begin{equation}\label{eq:HCF2summ}
    \dot{v}_n(t) = F\bigl(g_n(t),v_n(t),\Delta v_n(t)\bigr) + b_n,\qquad b_n\in\R.
\end{equation}
\end{enumerate}

The results show that the scaled heterogeneity factors $(a_n)_{n=1}^N$ in \eqref{eq:HCF1summ} can affect the linear stability of both linear and nonlinear car-following models $F$. 
However, the additive heterogeneity terms $(b_n)_{n=1}^N$ in \eqref{eq:HCF2summ} can only affect the stability of nonlinear models (see Table~\ref{tab:sum}).

\begin{table}[!ht]
    \centering\medskip
    \renewcommand{\arraystretch}{1.5}
    \begin{tabular}{p{2.7cm}|p{6cm}|p{6.5cm}}
        &\multicolumn{1}{c|}{\bf Scaled heterogeneity model}&\multicolumn{1}{c}{\bf Additive heterogeneity model\vspace{-10mm}}\\
        &$$ \begin{array}{c}
             \dot{v}_n(t)=a_n F(g_n(t),v_n(t),\Delta v_n(t))\\[-3mm]
             a_n>0
        \end{array}$$\vspace{-4mm} &  
        $$ 
        \begin{array}{c}
        \dot{v}_n(t) = F(g_n(t),v_n(t),\Delta v_n(t)) + b_n\\[-3mm]
        b_n\in\R
        \end{array}$$\vspace{-4mm}\\
        \hline
         {\bf Linear stability condition}&
        $$
         \frac{1}{2}\fv^2+\fv\fdv\geq\fg\bia
        $$ &
         $\bullet$~~\textit{Linear model:} ~{\bf as for the homogeneous model}: 
         $$\frac{1}{2}\fv^2+\fv\fdv\geq\fg$$\\
         &\vspace{-16mm}
         {\bf ~FVD model:} \vspace{-1mm}
         $$\frac{1}{2} T\lambda_1+T\lambda_2\geq \bia$$
         &\vspace{-7mm}
         $\bullet$~~\textit{Nonlinear model:} ~{\bf Model specific}\\
         &\vspace{-10mm}
         {\bf ~ATG model:} \vspace{-1mm}
         $$\frac{1}{2} T\lambda+1\geq \bia$$\vspace{-5mm}
         &\vspace{-10mm}
         {\bf ~ATG model:} \vspace{-1mm}
         $$\sum_{n=1}^N\;\frac{b_n(2\lambda T+1)}{(b_n+\lambda \ve)^3}
+\frac{\lambda^3T\ve^2}{2(b_n+\lambda \ve)^4}
    \geq0.$$\vspace{-4mm}
    \end{tabular}\medskip
    \caption{Summary of the results. Notations: $\fg=\partial_gF(\ge,\ve,0)$, $\fv=\partial_vF(\ge,\ve,0)$, $\fdv=\partial_{\Delta v}F(\ge,\ve,0)$, 
		and $\bia=\frac{1}{N}\sum_{n=1}^N1/a_n$. 
  The linear \textit{full velocity difference} (FVD) model \cite{jiang_FullVelocityDifference_2001} is given by $F(g,v,\Delta v)=\lambda_1(g/T-v)-\lambda_2\Delta v$ with $\lambda_1,\lambda_2,T>0$, while the \textit{adaptive time gap} (ATG) nonlinear model \cite{tordeux2010adaptive} reads $
    F(g,v,\Delta v)=\lambda v (1-Tv/g)-v\Delta v/g$ with $\lambda,T>0$.}
    \label{tab:sum}
\end{table}

\section{Car-following models and equilibrium solutions}\label{sec:DefMod}

\subsection{Homogeneous car-following models}

A general homogeneous car-following model reads
\begin{equation}\label{eq:CF}
\begin{cases}
    ~\dot{x}_n(t)= v_n(t)\\
    ~\dot{v}_n(t) = F\bigl(g_n(t),v_n(t),\Delta v_n(t)\bigr).
\end{cases}
\end{equation}
We assume that the universal acceleration function $F\colon\R^3\mapsto\R$ is monotone non-decreasing with its first argument and that there exists for any $\ge=L/N-\ell>0$ a unique equilibrium speed $\ve$ such that
\begin{equation}
    F(\ge,\ve,0)=0.
\end{equation}
Both equilibrium gap $\ge$ and equilibrium speed $\ve$ are parameterised by the vehicle concentration on the segment.

\paragraph{FVD and ATG models}
The linear \textit{full velocity difference} (FVD) model \cite{jiang_FullVelocityDifference_2001} is given by
\begin{equation}\label{eq:FVD}
    F(g,v,\Delta v) = \lambda_1\Bigl(\frac{g}{T}-v\Bigr) - \lambda_2\Delta v,\qquad \lambda_1,\lambda_2,T>0,
\end{equation}
and the nonlinear \textit{adaptive time gap} (ATG) model \cite{tordeux2010adaptive} is given by
\begin{equation}\label{eq:ATG}
    F(g,v,\Delta v)=\lambda v \Bigl(1-\frac{Tv}{g}\Bigr) - \frac{v\Delta v}{g},\qquad \lambda,T>0.
\end{equation}
For these two models, the equilibrium speed $\ve$ is 
\begin{equation}\label{eq:EVHomo}
   \ve = \frac{\ge}{T}=\frac{L/N-\ell}{T}.
\end{equation}

\subsection{Heterogeneous car-following models}
A general heterogeneous car-following model with quenched disorder reads
\begin{equation}\label{eq:HCF}
    \begin{cases}
    ~\dot{x}_n(t)= v_n(t)\\
    ~\dot{v}_n(t) = F_n\bigl(g_n(t),v_n(t),\Delta v_n(t)\bigr),
    \end{cases}
\end{equation}
where the acceleration function $F_n\colon\R^3\mapsto\R$ is specific to each vehicle. 
We assume that there exists an equilibrium solution $\bigl((\gen)_{n=1}^N,\ve\bigr)$ such that
\begin{equation}
\begin{cases}
     ~\displaystyle\sum_{n=1}^N \gen = L-N\ell = N\ge,\\
     ~F_n(\gen,\ve,0)=0,\qquad  n\in\{1,\dots,N\}.
\end{cases}
\end{equation}
In the sequel, we will consider two families of heterogeneous car-following models with quenched disorder: scaled and additive heterogeneity models.

\subsubsection{Scaled heterogeneity model}

The scaled heterogeneity model reads
	\begin{equation}\label{eq:HCF1}
        \dot{v}_n(t) = a_n F\bigl(g_n(t),v_n(t),\Delta v_n(t)\bigr),\qquad a_n>0.
  \end{equation}
Here, $a_nF=0$ implies $F=0$ as $a_n>0$. 
Therefore, the equilibrium solution of the heterogeneous model \eqref{eq:HCF1} matches the solution of the homogeneous model, i.e., $\gen=\ge$ for all $n\in\{1,\dots,N\}$ while $\ve$ is the solution of the equilibrium equation $F(\ge,\ve,0)=0$.

\subsubsection{Additive heterogeneity model}

The additive heterogeneity model is given by
\begin{equation}\label{eq:HCF2}
    \dot{v}_n(t) = F\bigl(g_n(t),v_n(t),\Delta v_n(t)\bigr) + b_n,\qquad b_n\in\R.
\end{equation}
Here, the equilibrium solution satisfies the gap conservation and the equilibrium condition, i.e.
\begin{equation}\label{eq:VE}
\begin{cases}
     ~\displaystyle\sum_{n=1}^N \gen=N\ge\\
     ~F(\gen,\ve,0)+b_n=0,
     \qquad n\in\{1,\dots,N\}.
\end{cases}
\end{equation}

\paragraph{Full velocity difference model}
\begin{propo}
The equilibrium state \eqref{eq:VE} of the FVD model \eqref{eq:FVD} with additive bias is given by
\begin{equation}\label{eq:veFVD}
   \ve = \frac{\ge}{T} + \frac{\bb}{\lambda_1},
   \quad\text{with}\quad\bb=\frac{1}{N}\sum_{n=1}^Nb_n,
\end{equation}
and
\begin{equation}\label{eq:genFVD}
     \gen = \ge + \frac{T}{\lambda_1} \bigr(\bb-b_n\bigr), \qquad n\in\{1,\dots,N\}.
\end{equation}
\end{propo}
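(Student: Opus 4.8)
The plan is to solve the linear equilibrium system \eqref{eq:VE} explicitly. First I would use the fact that at any equilibrium all vehicles move at the common speed $\ve$, so that every speed difference vanishes, $\Delta v_n=0$; inserting the FVD acceleration \eqref{eq:FVD} into the per-vehicle equilibrium equation $F(\gen,\ve,0)+b_n=0$ then reduces it to the affine relation
\begin{equation*}
    \lambda_1\Bigl(\frac{\gen}{T}-\ve\Bigr)+b_n=0,\qquad n\in\{1,\dots,N\},
\end{equation*}
which, since $\lambda_1,T>0$, I can solve for each gap individually as $\gen = T\ve-\frac{T}{\lambda_1}b_n$.

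The second step is to pin down $\ve$ using the one remaining constraint, the gap conservation $\sum_{n=1}^N\gen=N\ge$. Summing the expression for $\gen$ over $n$ and dividing by $N$ gives $T\ve-\frac{T}{\lambda_1}\bb=\ge$, with $\bb=\frac1N\sum_{n=1}^Nb_n$; rearranging yields \eqref{eq:veFVD}, $\ve=\ge/T+\bb/\lambda_1$. Substituting this back into $\gen=T\ve-\frac{T}{\lambda_1}b_n$ produces $\gen=\ge+\frac{T}{\lambda_1}(\bb-b_n)$, which is precisely \eqref{eq:genFVD}; this also makes the consistency of the two formulas transparent, since $\frac1N\sum_{n=1}^N\gen=\ge$ by construction.

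I do not expect a real analytical obstacle: the FVD model is affine in $(g,v,\Delta v)$, so \eqref{eq:VE} is a square linear system and the displayed pair is its unique solution (uniqueness coming from $\lambda_1,T>0$). The only subtlety is physical rather than mathematical --- for the equilibrium to be an admissible traffic state one needs each $\gen\ge0$, i.e.\ the biases $b_n$ must not be so large or so dispersed that \eqref{eq:genFVD} forces a negative gap --- and I would either record this as a standing hypothesis or flag it in a remark following the proof.
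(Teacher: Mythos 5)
Your proposal is correct and follows essentially the same route as the paper: solve the per-vehicle equilibrium relation for $\gen=T\bigl(\ve-b_n/\lambda_1\bigr)$, sum with the gap-conservation constraint to obtain $\ve$, and substitute back to get \eqref{eq:genFVD}. The admissibility remark on non-negative gaps is also consistent with the paper, which records the conditions \eqref{eq:c1FVD} and \eqref{eq:c2FVD} immediately after the proof.
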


\begin{proof}
We obtain from \eqref{eq:FVD} and \eqref{eq:VE} that $\lambda_1 (\gen/T-\ve)+b_n=0$ for all $n\in\{1,\dots,N\}$ and we can deduce
\begin{equation}\label{eq19}
    \gen=T\Bigl(\ve-\frac{b_n}{\lambda_1}\Bigr),\qquad n\in\{1,\dots,N\}.
\end{equation}
Using \eqref{eq19} and the gap conservation in \eqref{eq:VE}, 
the equilibrium speed $\ve$ is the solution of 
\begin{equation}
\sum_{n=1}^N T \Big(\ve-\frac{b_n}{\lambda_1}\Big)=N\ge,
\end{equation}
which is
\begin{equation}
     \ve=\frac{\ge}{T}+\frac{\bb}{\lambda_1},\qquad \bb=\frac{1}{N}\sum_{n=1}^Nb_n.
\end{equation}
It follows from \eqref{eq19} that 
\begin{equation}
    \gen=T\Bigl(\frac{\ge}{T} +\frac{\bb}{\lambda_1}-\frac{b_n}{\lambda_1}\Bigr)
    =\ge + \frac{T}{\lambda_1}\bigr(\bb-b_n\bigr),\qquad n\in\{1,\dots,N\}.
\end{equation}
\end{proof}

The equilibrium speed \eqref{eq:veFVD} is non-negative if 
\begin{equation}\label{eq:c1FVD}
    \bb>-\frac{\ge\lambda_1}{T},
\end{equation}
while the equilibrium gaps \eqref{eq:genFVD} are non-negative for each vehicle if 
\begin{equation}\label{eq:c2FVD}
   \max_n b_n<\frac{\ge\lambda_1}{T}+\bb.
\end{equation}
Furthermore, the speed \eqref{eq:veFVD} matches the equilibrium speed of the homogeneous model
    $\ve=\ge/T$,
and the equilibrium gaps \eqref{eq:genFVD} are 
\begin{equation}
    \gen=\ge-b_n \frac{T}{\lambda_1},\qquad n\in\{1,\dots,N\},
    \end{equation}
if the individual bias $b_n$ is zero in average, i.e., if $\bb=0$.

\begin{remark}[Homogeneous case]
If the individual bias $b_n$ is identical for all vehicles,   $b_n=b$, $n\in\{1,\dots,N\}$,
then the equilibrium speed \eqref{eq:veFVD} is
\begin{equation}
    \ve=\frac{\ge}{T} + \frac{b}{\lambda_1},
\end{equation}
while the equilibrium gaps \eqref{eq:genFVD} are uniform: $\gen=\ge$, $n\in\{1,\dots,N\}$.
\end{remark}

\paragraph{Adaptive time gap model}

\begin{propo}
The equilibrium state \eqref{eq:VE} of the ATG model \eqref{eq:ATG} with additive bias is given by
\begin{equation}\label{eq:veATG}
    \sum_{n=1}^N \frac{\lambda T\ve^2}{b_n+\lambda \ve}=N\ge
\end{equation}
and
\begin{equation}\label{eq:genATG}
    \gen=\frac{\lambda T\ve^2}{b_n+\lambda \ve},\quad n\in\{1,\dots,N\}.
\end{equation}
\end{propo}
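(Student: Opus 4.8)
The plan is to mirror the proof already given for the FVD model. \textbf{First}, I would substitute the ATG acceleration function \eqref{eq:ATG} into the per-vehicle equilibrium equation $F(\gen,\ve,0)+b_n=0$ of \eqref{eq:VE}. Since the third argument is $\Delta v=0$, the coupling term $v\Delta v/g$ drops out and we are left with $\lambda\ve\bigl(1-T\ve/\gen\bigr)+b_n=0$ for each $n$. Rearranging gives $\lambda T\ve^2/\gen=\lambda\ve+b_n$, hence
\begin{equation*}
  \gen=\frac{\lambda T\ve^2}{b_n+\lambda\ve},\qquad n\in\{1,\dots,N\},
\end{equation*}
which is exactly \eqref{eq:genATG}. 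This division is legitimate on the physically relevant branch, where $\ve>0$ and $b_n+\lambda\ve>0$ for all $n$; if some $b_n+\lambda\ve$ were $\le 0$ the corresponding gap would be negative or undefined, so these inequalities are the natural admissibility conditions, analogous to \eqref{eq:c1FVD}--\eqref{eq:c2FVD} in the linear case.

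\textbf{Second}, I would insert this expression for $\gen$ into the gap-conservation constraint $\sum_{n=1}^N\gen=N\ge$ of \eqref{eq:VE}, which immediately yields
\begin{equation*}
  \sum_{n=1}^N\frac{\lambda T\ve^2}{b_n+\lambda\ve}=N\ge,
\end{equation*}
i.e.\ \eqref{eq:veATG}. Unlike the FVD case, the nonlinearity of $F$ turns this into a rational equation in $\ve$ with no closed form, so the equilibrium speed can only be characterised implicitly; as a consistency check, putting all $b_n=0$ reduces the left-hand side to $N\lambda T\ve^2/(\lambda\ve)=NT\ve$, recovering the homogeneous value $\ve=\ge/T$ of \eqref{eq:EVHomo}.

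\textbf{The main obstacle} I would expect is not the algebra but the well-posedness of \eqref{eq:veATG}, namely arguing that an admissible $\ve$ exists and is unique. For this I would examine $\Phi(\ve):=\sum_{n=1}^N\lambda T\ve^2/(b_n+\lambda\ve)$ on the interval $\ve>\max_n(-b_n/\lambda)$, where every denominator is positive. When the $b_n$ are not too negative — in particular when all $b_n\ge 0$, so this interval contains $\{\ve>0\}$ — each summand is increasing there (its derivative is proportional to $\ve(2b_n+\lambda\ve)>0$), so $\Phi$ increases strictly from $0$ to $+\infty$ and \eqref{eq:veATG} has a unique root; in the general case $\Phi\to+\infty$ also at the left endpoint whenever a vanishing denominator occurs, so one must restrict attention to the branch picked out by the admissibility conditions $\ve>0$ and $\min_n(b_n+\lambda\ve)>0$. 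This existence/uniqueness discussion, together with recording those constraints, is the only part requiring real care; everything else is a direct substitution.
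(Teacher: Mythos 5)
Your proposal is correct and follows essentially the same route as the paper: substitute the ATG function into the per-vehicle equilibrium equation $F(\gen,\ve,0)+b_n=0$ to get $\lambda\ve(1-T\ve/\gen)+b_n=0$, solve for $\gen$, and then impose the gap-conservation constraint to obtain the implicit equation for $\ve$. The additional discussion of admissibility ($b_n+\lambda\ve>0$, cf.\ \eqref{eq:c1ATG}) and of existence/uniqueness of the root of \eqref{eq:veATG} goes beyond what the paper proves but is consistent with the remarks it makes after the proposition.
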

\begin{proof}
We obtain from \eqref{eq:ATG} and \eqref{eq:VE} that $\lambda\ve(1-T\ve/\gen)+b_n=0$ and we can deduce
\begin{equation}
    \gen=\frac{T\ve}{\frac{b_n}{\lambda \ve}+1}
    =\frac{\lambda T\ve^2}{b_n+\lambda \ve},\qquad n\in\{1,\dots,N\}.
\end{equation}
Using the gap conservation $\sum_{n=1}^N \gen=N\ge$, the equilibrium speed $\ve$ is the solution of \eqref{eq:veATG}.
\end{proof}

Let us note that the equilibrium gaps $\gen$ \eqref{eq:genATG} are positive for each vehicle if
\begin{equation}\label{eq:c1ATG}
    \ve> -\frac{1}{\lambda}\min_n b_n.
\end{equation}

\begin{remark}
We recover the equilibrium solution of the homogeneous ATG model \eqref{eq:EVHomo}
\begin{equation}
\ve=\frac{\ge}{T}\quad\text{and}\quad
\gen=\ge,\quad n\in\{1,\dots,N\},
\end{equation} 
if the biases are zero, i.e., $b_n=0$ for all $n\in\{1,\dots,N\}$.
\end{remark}

In the case where the individual bias is the same 
for all vehicles: $b_n=b$, $n\in\{1,\dots,N\}$, we have 
\begin{equation}\label{eq:veATG1}
    \frac{\lambda T\ve^2}{b+\lambda \ve}=\ge.
\end{equation}
Assuming $b>-\lambda\ve$ (see \eqref{eq:c1ATG}), we obtain
\begin{equation}
    \lambda T\ve^2-(b+\lambda \ve)\ge=0,
\end{equation}
and we can deduce that (we take the positive root to make $\ve$ positive) 
\begin{equation}\label{eq:veATG2}
    \ve = \frac{\ge\lambda+\sqrt{(\ge\lambda)^2+4\lambda Tb\ge}}{2\lambda T}
    = \frac{\ge}{2T} \Bigl(1+\sqrt{1+\frac{4Tb}{\lambda\ge}}\Bigr).
\end{equation}
The equilibrium speed $\ve$ exists if $1+\frac{4Tb}{\lambda\ge}\geq0$ and we obtain the condition
\begin{equation}\label{eq:c2ATG}
    b\geq -\frac{\lambda\ge}{4T}.
\end{equation}
Note that \eqref{eq:c2ATG} implies the preliminary condition $b>-\lambda\ve$.

\section{Linear stability analysis} \label{sec:LinStab}

\subsection{General linear stability condition}

The partial derivatives of the general heterogeneous car-following model \eqref{eq:HCF} evaluated at the equilibrium are given by
\begin{equation}
    \fgn=\frac{\partial F_n}{\partial g}(\gen,\ve,0),
    \qquad
    \fvn=\frac{\partial F_n}{\partial v}(\gen,\ve,0),
    \qquad
    \fdvn=\frac{\partial F_n}{\partial \Delta v}(\gen,\ve,0).
\end{equation}
The characteristic equation of the resulting linear ODE system reads
\begin{equation}\label{eq:EC}
    \prod_{n=1}^N \bigl[z^2-z(\fvn-\fdvn)+\fgn\bigr] -e^{-{i\theta N}} \prod_{n=1}^N \bigl[-z\fdvn+\fgn\bigr]=0,
    \qquad z\in\C,\;\theta\in[0,2\pi].
\end{equation}
A sufficient general linear stability condition for which all eigenvalues $z$ have non-positive real parts, except one equal to zero (due to the periodic boundaries), 
is given by \cite[Eq.~(5)]{ngoduy2015effect} 
\begin{equation}\label{eq:SC}
    \sum_{n=1}^N \biggl[\frac{1}{2}\Bigl(\frac{\fvn}{\fgn}\Bigr)^2 +\frac{\fvn\fdvn}{\fgn\fgn}-\frac{1}{\fgn}\biggr]\geq0.
\end{equation}

For homogeneous models with
$\fgn=\fg$, $\fvn=\fv$, and $\fdvn=\fdv$, for all $n\in\{1,\dots,N\}$,
assuming $\fg>0$, the linear stability condition \eqref{eq:SC} is given by \cite{tordeux_LinearStabilityAnalysis_2012,treiber2013traffic}
\begin{equation}\label{eq:SCHomo}
    \frac{1}{2}\fv^2+\fv\fdv\geq \fg.
\end{equation}
For instance, we have for the FVD model \eqref{eq:FVD} 
\begin{equation}
     \fg=\frac{\lambda_1}{T}>0,\quad\fv = -\lambda_1, \quad\text{and}\quad \fdv = -\lambda_2,
\end{equation} 
and thus the linear stability condition reads \cite{jiang_FullVelocityDifference_2001}
\begin{equation}\label{eq:SC-FVD}
    \frac{1}{2}\lambda_1 + \lambda_2 \geq \frac{1}{T}.
\end{equation}
In case of the ATG model \eqref{eq:ATG} we have 
\begin{equation}\label{eq:SC-ATG}
    \fg=\frac{\lambda}{T}>0, \quad
    \fv=-\lambda, \quad\text{and}\quad \fdv=-\frac{1}{T},
\end{equation}
and the linear stability condition
\begin{equation}
    \frac{\lambda T}2+1 \geq 1
\end{equation}
systematically holds since $\lambda,T>0$. 
Indeed, the ATG model is unconditionally linearly stable, cf.\  \cite{khound2021extending}.

\subsection{Scaled heterogeneity model}

\begin{propo}
Assuming $\fg>0$, 
\begin{equation}\label{eq:SC1}
    \frac{1}{2}\fv^2+\fv\fdv\geq \bia\fg,\quad\text{with}\quad \bia=\frac{1}{N}\sum_{n=1}^N \frac{1}{a_n},
\end{equation}
is a sufficient linear stability condition of the scaled heterogeneity model \eqref{eq:HCF1}.
\end{propo}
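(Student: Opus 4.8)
The plan is to deduce \eqref{eq:SC1} directly from the general sufficient condition \eqref{eq:SC} of \cite{ngoduy2015effect} by computing the equilibrium partial derivatives of the scaled model \eqref{eq:HCF1}. First I would recall the observation made just after \eqref{eq:HCF1}: since $a_n>0$, the equilibrium of the scaled model coincides with the homogeneous one, so $\gen=\ge$ for all $n$ and $\ve$ solves $F(\ge,\ve,0)=0$. Differentiating $a_n F$ and evaluating at $(\ge,\ve,0)$ then gives $\fgn=a_n\fg$, $\fvn=a_n\fv$ and $\fdvn=a_n\fdv$. Because $\fg>0$ by assumption and $a_n>0$, we have $\fgn>0$ for every $n$, which is exactly the regularity needed for \eqref{eq:SC} to apply.

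Next I would substitute these expressions into the $n$th summand of \eqref{eq:SC}. The key point is that the factor $a_n$ cancels in the two quadratic terms, $(\fvn/\fgn)^2=(\fv/\fg)^2$ and $\fvn\fdvn/(\fgn)^2=\fv\fdv/\fg^2$, but survives in the linear term, where $1/\fgn=1/(a_n\fg)$. Hence the left-hand side of \eqref{eq:SC} equals $N\bigl(\tfrac12\fv^2/\fg^2+\fv\fdv/\fg^2\bigr)-\tfrac{1}{\fg}\sum_{n=1}^N 1/a_n$. Multiplying the resulting inequality by $\fg^2/N>0$ and recognising $\tfrac1N\sum_{n=1}^N 1/a_n=\bia$ yields precisely \eqref{eq:SC1}.

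The computation is elementary; the only point requiring care is verifying the hypotheses under which \eqref{eq:SC} is a legitimate sufficient condition, namely that the equilibrium is the homogeneous one and that $\fgn>0$ for all $n$ — both immediate from $a_n>0$ and $\fg>0$. There is no substantial obstacle beyond this bookkeeping. Conceptually, the heterogeneity enters only through the harmonic-type mean $\bia$ rather than through the more complicated combination one might expect, precisely because the multiplicative disorder rescales $\fgn$, $\fvn$, $\fdvn$ by the same factor $a_n$, so only the term in \eqref{eq:SC} that is homogeneous of degree $-1$ in the derivatives feels the disorder.
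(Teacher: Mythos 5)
Your proposal is correct and follows essentially the same route as the paper's proof: compute $\fgn=a_n\fg$, $\fvn=a_n\fv$, $\fdvn=a_n\fdv$ at the (homogeneous) equilibrium and substitute into the general condition \eqref{eq:SC}, whereupon the $a_n$ cancels in the quadratic terms and survives only in the $1/\fgn$ term, giving \eqref{eq:SC1}. Your write-up is in fact more explicit than the paper's, which states the substitution without carrying out the cancellation.
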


\begin{proof}
We have for the scaled heterogeneity model \eqref{eq:HCF1} 
\begin{equation}
    \fgn=a_n\fg,\qquad\fvn=a_n\fv,\qquad\fdvn=a_n\fdv.
\end{equation}
Using the general linear stability condition \eqref{eq:SC} with these partial derivatives allows recovering directly the condition \eqref{eq:SC1}.
\end{proof}

\begin{remark} 
The condition \eqref{eq:SC1} matches the linear stability condition \eqref{eq:SCHomo} of homogeneous models if $\bia=1$. 
\end{remark}

The condition for the FVD model \eqref{eq:FVD} reads
\begin{equation}
    \frac{1}{2}T\lambda_1+T\lambda_2\geq\bia,
\end{equation}
while the linear stability condition for the ATG model \eqref{eq:ATG} is given by 
\begin{equation}
     \frac{\lambda T}2+1 \geq \bia.
\end{equation}

Values of $a_n$ less than 1 affect negatively the stability and inversely. 
Due to the convexity of the inverse function, the weight of a term close to zero can be much higher than the weight of a high term. 
Assuming that $(a_n)_{n=1}^N$ are independent and randomly distributed according to a continuous distribution with density $h$ on $(0,\infty)$, it is easy to check that the inverse $1/a_n$ has a density $u^2h(1/u)$ on $(0,\infty)$. 
For instance, if $a_n$ is uniformly distributed in $[1-\kappa,1+\kappa]$ with $0<\kappa<1$, we obtain $\bia\to1+\kappa^2$. The scaled heterogeneity negatively affects the stability although it is symmetrically distributed around 1. 
Note that for this special case, the stability condition for the ATG model is asymptotically $\lambda T\geq2\kappa$. 

\subsection{Additive heterogeneity model}

\subsubsection{Linear models}

\begin{propo}\label{prop:SC_HCFlin}
For linear models, the stability condition of the additive heterogeneity model \eqref{eq:HCF2} matches the condition \eqref{eq:SCHomo} of homogeneous models. 
In other words, the additive heterogeneity does not affect the stability of linear models.
\end{propo}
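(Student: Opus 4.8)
The plan is to exploit the one feature that distinguishes a linear (affine) acceleration function from a nonlinear one: its partial derivatives are constants that do not depend on the point of evaluation. Concretely, if $F$ has the form $F(g,v,\Delta v)=\fg\,g+\fv\,v+\fdv\,\Delta v+c$ for constants $\fg,\fv,\fdv,c$, then $\partial_gF\equiv\fg$, $\partial_vF\equiv\fv$, $\partial_{\Delta v}F\equiv\fdv$ on all of $\R^3$. This is exactly the case for the FVD model \eqref{eq:FVD}.

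First I would record the equilibrium structure of \eqref{eq:HCF2} in the linear case. Although the equilibrium gaps $\gen$ determined by \eqref{eq:VE} generally differ from vehicle to vehicle (the FVD computation \eqref{eq:genFVD} already exhibits this), the key observation is that this inhomogeneity of the base state is irrelevant to the linearisation: evaluating the partial derivatives of $F_n=F+b_n$ at the vehicle-specific equilibrium $(\gen,\ve,0)$ still yields
$$\fgn=\fg,\qquad \fvn=\fv,\qquad \fdvn=\fdv,\qquad n\in\{1,\dots,N\},$$
because adding the constant $b_n$ leaves every derivative unchanged, and because the derivatives of a linear $F$ are position-independent.

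Then I would substitute these identities into the general sufficient stability condition \eqref{eq:SC}. Every summand becomes identical, so the sum collapses to $N$ copies of a single term and the inequality reads $N\bigl[\tfrac12(\fv/\fg)^2+\fv\fdv/\fg^2-1/\fg\bigr]\ge0$. Assuming $\fg>0$ and multiplying through by $\fg^2/N>0$, this is precisely \eqref{eq:SCHomo}, the stability condition of the homogeneous model, so the additive biases $(b_n)$ drop out entirely.

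I do not expect a genuine obstacle here; the only subtlety worth flagging is that one should resist the temptation to believe the vehicle-dependent equilibria $(\gen)$ could matter — they would for a nonlinear $F$, through the dependence of the Jacobian entries on the base state, which is exactly why the companion nonlinear results (e.g.\ the ATG formula in Table~\ref{tab:sum}) are model-specific — but for linear $F$ that dependence is absent. If a fully self-contained argument were preferred over citing \eqref{eq:SC}, one could instead note that under $\fgn=\fg$, $\fvn=\fv$, $\fdvn=\fdv$ the characteristic equation \eqref{eq:EC} reduces verbatim to the homogeneous characteristic equation, so the two models share the same spectrum and hence the same stability threshold.
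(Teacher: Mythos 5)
Your proposal is correct and follows essentially the same route as the paper: observe that the partial derivatives of a linear (affine) $F$ are constants unaffected both by the additive bias $b_n$ and by the vehicle-specific equilibrium point, so $\fgn=\fg$, $\fvn=\fv$, $\fdvn=\fdv$, and the general condition \eqref{eq:SC} collapses to \eqref{eq:SCHomo}. The extra details you supply (the explicit collapse of the sum and the remark about the characteristic equation) are consistent elaborations of the paper's shorter argument.
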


\begin{proof}
The partial derivatives are constant for linear models even in presence of additive biases, i.e.,
\begin{equation}
    \fgn=\fg,\qquad\fvn=\fv,\qquad\fdvn=\fdv.
\end{equation}
for all $n\in\{1,\dots,N\}$.
Therefore, the stability condition for linear models is the same as the 
condition for homogeneous models given in \eqref{eq:SCHomo}.
\end{proof}

\subsubsection{Nonlinear models}

By definition, the partial derivatives at equilibrium of nonlinear models depend on the equilibrium state. 
The equilibrium state being specific to the vehicle for the additive heterogeneity model, see \eqref{eq:genFVD} of \eqref{eq:genATG}, it turns out that the partial derivatives are heterogeneous. 
The dependencies with the biases depends on the nature of the nonlinear components of the model.
This makes the linear stability condition model specific.

\begin{propo}
A sufficient linear stability condition of the ATG model \eqref{eq:ATG} with additive heterogeneity \eqref{eq:HCF2} is given by
\begin{equation}\label{eq:SCatg2}
   \sum_{n=1}^N\;\frac{b_n(2\lambda T+1)}{(b_n+\lambda \ve)^3}
   +\frac{\lambda^3T\ve^2}{2(b_n+\lambda \ve)^4}
    \geq0.
\end{equation}
\end{propo}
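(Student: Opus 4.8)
The plan is to apply the general sufficient stability condition \eqref{eq:SC} with the equilibrium partial derivatives of the ATG acceleration function $F(g,v,\Delta v)=\lambda v(1-Tv/g)-v\Delta v/g$ evaluated at the heterogeneous equilibrium $(\gen,\ve,0)$ with $\gen=\lambda T\ve^2/(b_n+\lambda\ve)$ from \eqref{eq:genATG}. First I would differentiate $F$ and evaluate at $(\gen,\ve,0)$, which gives
\begin{equation*}
    \fgn=\frac{\lambda T\ve^2}{\gen^2},\qquad
    \fvn=\lambda-\frac{2\lambda T\ve}{\gen},\qquad
    \fdvn=-\frac{\ve}{\gen}.
\end{equation*}
Abbreviating $s_n:=b_n+\lambda\ve$ --- which is strictly positive for each $n$ by the positivity condition \eqref{eq:c1ATG}, so that together with $\ve\neq0$ the coefficient $\fgn$ is positive and \eqref{eq:SC} is applicable --- and substituting $\gen=\lambda T\ve^2/s_n$, these reduce to
\begin{equation*}
    \fgn=\frac{s_n^2}{\lambda T\ve^2},\qquad
    \fvn=-\frac{\lambda\ve+2b_n}{\ve},\qquad
    \fdvn=-\frac{s_n}{\lambda T\ve}.
\end{equation*}

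Next I would assemble the ratios appearing in \eqref{eq:SC}, namely $\fvn/\fgn=-(\lambda\ve+2b_n)\lambda T\ve/s_n^2$ and $\fdvn/\fgn=-\ve/s_n$, so that the $n$-th summand of \eqref{eq:SC} equals
\begin{equation*}
    \frac{\lambda T\ve^2}{s_n^4}\Bigl[\tfrac12\lambda T(\lambda\ve+2b_n)^2+(\lambda\ve+2b_n)\,s_n-s_n^2\Bigr].
\end{equation*}
The key algebraic step is the identity $(\lambda\ve+2b_n)\,s_n-s_n^2=b_n s_n$, which is immediate after writing $s_n=\lambda\ve+b_n$ and factoring; combined with $(\lambda\ve+2b_n)^2=\lambda^2\ve^2+4\lambda\ve b_n+4b_n^2$, the bracket collapses to $\tfrac12\lambda^3T\ve^2+(2\lambda T+1)\,b_n s_n$. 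Dividing the whole sum by the positive constant $\lambda T\ve^2$ then turns \eqref{eq:SC} into exactly the asserted condition \eqref{eq:SCatg2}.

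The argument is essentially careful bookkeeping rather than a conceptual step. The only preliminary points are that $\ve\neq0$ and $s_n=b_n+\lambda\ve\neq0$, so that the equilibrium gaps \eqref{eq:genATG} and the derivatives above are well defined and $\fgn>0$ --- both guaranteed by \eqref{eq:c1ATG} together with the equilibrium equation \eqref{eq:veATG}. I expect the main obstacle to be purely computational: correctly expanding and regrouping the terms in the bracket, in particular spotting the cancellation $(\lambda\ve+2b_n)\,s_n-s_n^2=b_n s_n$ that yields the compact numerators $\lambda^3 T\ve^2$ and $(2\lambda T+1)\,b_n$ in \eqref{eq:SCatg2}.
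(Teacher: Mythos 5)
Your proposal is correct and follows essentially the same route as the paper: compute the ATG partial derivatives at the heterogeneous equilibrium, insert them into the general sufficient condition \eqref{eq:SC}, substitute $\gen=\lambda T\ve^2/(b_n+\lambda\ve)$, and simplify to \eqref{eq:SCatg2}. The only difference is cosmetic --- you substitute $s_n=b_n+\lambda\ve$ into the derivatives at the outset, whereas the paper carries $\gen$ through the computation and substitutes at the end; the algebra, including the cancellation producing the $(2\lambda T+1)b_n s_n$ term, is the same.
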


\begin{proof}
For the ATG model \eqref{eq:ATG}, we obtain the partial derivatives
\begin{equation}
    \fgn=
    \frac{\lambda T\ve^2}{(\gen)^2},
    \qquad
    \fvn=
    \lambda\Bigl(1-\frac{2T\ve}{\gen}\Bigr),
    \qquad
    \fdvn=
    -\frac{\ve}{\gen}\\
\end{equation}
We have
\begin{equation}
    \frac{\fvn}{\fgn}
    =\frac{\lambda\Bigl(1-\frac{2T\ve}{\gen}\Bigr)}{\frac{\lambda T\ve^2}{(\gen)^2}}
    =\frac{\gen(\gen-2T\ve)}{T\ve^2},
\end{equation}
while
\begin{equation}
    \frac{\fvn\fdvn}{\fgn\fgn}
    =-\frac{\lambda\Bigl(1-\frac{2T\ve}{\gen}\Bigr)\frac{\ve}{\gen}}{\frac{\lambda^2 T^2\ve^4}{(\gen)^4}}
    =-\frac{(\gen)^2(\gen-2T\ve)}{\lambda T^2\ve^3}.
\end{equation}
The sufficient linear stability condition \eqref{eq:SC} is then given by
\begin{equation}
    \sum_{n=1}^N \biggl[\frac{1}{2}\Bigl(\frac{\gen(\gen-2T\ve)}{T\ve^2}\Bigr)^2
    -\frac{(\gen)^2(\gen-2T\ve)}{\lambda T^2\ve^3}
    -\frac{(\gen)^2}{\lambda T\ve^2}\biggr]\geq0,
\end{equation}
or again
\begin{multline}
    \sum_{n=1}^N (\gen)^2 \biggl[\frac{\lambda(\gen-2T\ve)^2}{2T\ve^2}
    -\frac{\gen-2T\ve}{T\ve}-1\biggr]
    \\= \sum_{n=1}^N\;(\gen)^2 \biggl[\frac{\lambda(\gen-2T\ve)^2}{2T\ve^2}
    -\frac{\gen-T\ve}{T\ve}\biggr]\geq0.
\end{multline}
Then, using
$\gen=\frac{\lambda T\ve^2}{b_n+\lambda \ve}$ 
and remarking that
$\gen-2T\ve=-T\ve\frac{2b_n+\lambda \ve}{b_n+\lambda \ve}$
while
$\gen-T\ve=-T\ve\frac{b_n}{b_n+\lambda \ve}$,
we obtain 
\begin{equation}
    \sum_{n=1}^N \Bigl(\frac{\lambda T\ve^2}{b_n+\lambda \ve}\Bigr)^2 
    \Biggl[\frac{\Bigl(T\ve\frac{2b_n+\lambda \ve}{b_n+\lambda \ve}\Bigr)^2}{2T\ve^2}
    +\frac{T\ve\frac{b_n}{b_n+\lambda \ve}}{T\ve}\Biggr]
    \geq0.
\end{equation}
After simplifications (we have $\lambda,T,\ve>0$), it follows
\begin{equation}
    \sum_{n=1}^N \frac{1}{(b_n+\lambda \ve)^2}
    \biggl[\frac{\lambda T}2\Bigl(\frac{2b_n+\lambda \ve}{b_n+\lambda \ve}\Bigr)^2
    +\frac{b_n}{b_n+\lambda \ve}\biggr]
    \geq0,
\end{equation}
or again
\begin{equation}\sum_{n=1}^N \frac{1}{(b_n+\lambda \ve)^4}
    \biggl[\frac{\lambda T}{2} (2b_n+\lambda \ve)^2
    +b_n(b_n+\lambda \ve)\biggr]
    \geq0.
\end{equation}
We have
\begin{equation}\label{eq:intermed}
\begin{split}
    \frac{\lambda T}{2} (2b_n+\lambda \ve)^2 + b_n(b_n+\lambda \ve)
    &=2\lambda Tb_n^2 +\frac{1}{2}\lambda^3T\ve^2 +2\lambda^2Tb_n\ve+b_n^2+b_n\lambda\ve\\
    &=b_n^2(2\lambda T+1)+b_n(2\lambda T+1)\lambda\ve + \frac{1}{2}\lambda^3T\ve^2\\
    &=b_n(2\lambda T+1)(b_n+\lambda\ve)+\frac{1}{2}\lambda^3T\ve^2,
\end{split}
\end{equation}
and the linear stability condition can be written as \eqref{eq:SCatg2}.
\end{proof}


\begin{remark}
    Note that $\lambda,T>0$ and the model is unconditionally linearly stable if $b_n=0$ for all $n\in\{1,\dots,N\}$. 
    Indeed, the homogeneous ATG model is unconditionally linearly stable \cite{khound2021extending}. 
\end{remark} 

    The stability condition \eqref{eq:SCatg2} holds systematically if the biases $b_n\geq0$, $n\in\{1,\dots,N\}$ are all positive. 
    Since traffic performance is improved when the biases are positive, this gives us a way to improve the stability of platooning systems. 
    Conversely, negative biases are needed to destabilise the system. 

    More precisely, the right-hand term in \eqref{eq:SCatg2} is always positive, but the left-hand term can be negative if $b_n<0$ and even strongly negative (potentially breaking stability) if some 
    $b_n\xrightarrow[+]{}-\lambda\ve$. 
    However, this term is offset by the positive term, which also explodes when some $b_n\xrightarrow[+]{}-\lambda\ve$,
    and even faster than the negative term (exponent 4 versus exponent 3). 
    This suggests that moderately negative $b_n\in[-\lambda\ve,0]$ can destabilise the system.

    When all individual biases are the same, 
    i.e., $b_n=b>-\lambda\ge/(4T)$ for all $n\in\{1,\dots,N\}$, dividing by $\lambda T\ve^2$ the last line of \eqref{eq:intermed} and using \eqref{eq:veATG1}, we obtain the linear stability condition of the ATG model 
\begin{equation}
    b\geq\frac{-\lambda^2\ge}{4T\lambda+2}.
\end{equation}
The bias has to be negative and sufficiently low, especially for high $\lambda$ or low $T$, to destabilise the system.

\section{Numerical results} \label{sec:Sim}
In this section we present some numerical results to illustrate the investigations. 
We focus on the linear FVD model \eqref{eq:FVD} and the nonlinear ATG model \eqref{eq:FVD} with additive heterogeneity \eqref{eq:HCF2}. 
Indeed, theoretical results show that linear models with additive heterogeneity are systematically stable if the initial model is stable, see Proposition~\ref{prop:SC_HCFlin}. 
However, nonlinear models such as the ATG model may be unstable for large additive biases, especially negative ones, see \eqref{eq:SCatg2}.

\subsection{Numerical schemes}

The car-following models are simulated using an implicit Euler scheme for the positions of the vehicles, and an explicit Euler scheme for the speeds. 
In both schemes the time step is $\delta t=0.01$~s. 

\paragraph{Full Velocity Difference Model}
For the FVD model \eqref{eq:FVD} with additive heterogeneity \eqref{eq:HCF1}, the scheme for the $n$-th vehicle, $n\in\{1,\dots,N\}$, is given by:
\begin{equation}\label{eq:FVDnum}
    \begin{cases}
    ~x_n(t+\delta t)= x_n(t)+\delta t v_n(t+\delta t),\\
    ~\displaystyle v_n(t+\delta t) = v_n(t)+\delta t\Bigl[\lambda_1 \Bigl(\frac{g_n(t)}{T}-v_n(t)\Bigr) - \lambda_2\Delta v_n(t)+b_n\Bigr],
    \end{cases}
\end{equation}
where $T=1$~s, $\lambda_1=1$~s$^{-1}$, 
$\lambda_2=0.5$~s$^{-1}$. 
Such a setting is such that the stability condition of the homogeneous FVD model \eqref{eq:SC-FVD} holds and is critical.
The additive biases $(b_1,\dots,b_N)$ are assumed to be initially randomly distributed on $[-5,5]$~m/s, independent, and constant over the time. 
The FVD model, being linear, remains stable even in the presence of large biases (see Proposition~\ref{prop:SC_HCFlin}).

\paragraph{Adaptive Time Gap Model}

The ATG model \eqref{eq:ATG} is nonlinear and has a singularity when the distance gap $g$ is zero. 
However, negative gaps (i.e., collisions between the vehicles) can be expected during the simulations, especially if the model is unstable.
To extend the definition of the model even in the cases where the gap is negative, we reformulate the ATG model as
\begin{align*}
\dot{v}_n(t) &= \lambda v_n(t) \Bigl(1-\frac{Tv_n(t)}{g_n(t)}\Bigr) - \frac{v_n(t)\Delta v_n(t)}{g_n(t)},\\
    &=\frac{1}{T_n(t)} \bigl(\lambda (g_n(t)-Tv_n(t))-\Delta v_n(t)\bigr),
\end{align*}
where 
\begin{equation*}
    T_n(t)=\frac{g_n(t)}{v_n(t)},\qquad v_n(t)>0,
\end{equation*}
is the time gap of the $n$-th vehicle, $n\in\{1,\dots,N\}$.
We bound the time gap in $[\tmin,\tmax]$ using the mollifier $T_\varepsilon\big(g_n(t),v_n(t)\big)$ given by 
\begin{equation}
        T_\varepsilon(g,v)= f_\varepsilon \biggl(\tmin,f_{-\varepsilon}\Bigl(\tmax,\frac{g}{f_{\varepsilon}(0,v)}\Bigr)\biggr),
\end{equation}
where $f_\varepsilon$ is the $\operatorname{LogSumExp}$ function 
\begin{equation}\label{eq:LogSumExp}
   f_\varepsilon(a,b)=\varepsilon \log\bigl(\exp(a/\varepsilon)+\exp(b/\varepsilon)\bigr),\qquad \varepsilon=0.01.
\end{equation}
The function $f_\varepsilon(a,b)$ converges to the maximum of $a$ and $b$ as $\varepsilon\to0^+$, and to the minimum as $\varepsilon\to0^-$. 
This smoothing of the dynamics avoids singularities when the vehicles collide or when the speed is negative.
The numerical scheme for the extended ATG model with additive heterogeneity for the $n$-th vehicle, $n\in\{1,\dots,N\}$, is finally
\begin{equation}\label{eq:ATGnum}
    \begin{cases}
    ~x_n(t+\delta t) = x_n(t)+\delta t v_n (t+\delta t),\\
    ~\displaystyle v_n(t+\delta t) = v_n(t) + \delta t
    \Bigl[\frac{\lambda \bigl(g_n(t)-Tv_n(t)\bigr)-\Delta v_n(t)}{T_\varepsilon\bigl(g_n(t),v_n(t)\bigr)}+b_n\Bigr].
    \end{cases}
\end{equation}

The parameter values for the simulation are $\lambda=0.2$~s$^{-1}$, $\tmin=0.1$~s, and $\tmax=4$~s, while the desired time gap is $T=1$~s, as for the FVD model. 
The additive biases $(b_1,\dots,b_N)$ are assumed to be initially randomly distributed on $[-1,1]$~m/s, independent, and constant over the time. 
Such parameterization is such that the stability condition \eqref{eq:SCatg2} does not hold, i.e., the ATG model \eqref{eq:ATGnum} with additive heterogeneity is unstable.

\subsection{Simulation results}

The simulations are performed with 20 vehicles of length $\ell=5$~m on a segment of length $L=230$~m with periodic boundaries, starting from a uniform initial condition. Such a setting is similar to the scenarios of the experiments carried out in Japan in 2007 \cite{sugiyama2008traffic} and more recently in the United States \cite{stern_DissipationStopandgoWaves_2018}, which show rapid formation of stop-and-go waves.

We compare the dynamics obtained with the linear FVD model, which remains stable even in the presence of acceleration biases, with the nonlinear ATG model, where the biases induce an instability. 
The trajectories of the vehicles over the first 200 seconds are shown in Fig.~\ref{fig:Traj}, while the standard deviations of the vehicle distances and speeds are shown in Fig.~\ref{fig:SD}. 

On the one hand, the system converges to an equilibrium with heterogeneous gaps, due to acceleration biases $(b_1,\dots,b_N)$, see Fig.~\ref{fig:Traj}, left panel. 
The gap deviation converges to a constant while the speed deviation converges to zero (see Fig.~\ref{fig:SD}, grey curves). 
The biases affect the equilibrium solution. 
However, the FVD model, being linear, remains stable (see Proposition~\ref{prop:SC_HCFlin}).

On the other hand, the biases affect both the equilibrium and the stability of the nonlinear ATG model. 
Indeed, the stability condition \eqref{eq:SCatg2} no longer holds for the chosen parameter setting.
The dynamics show the emergence of a stop-and-go wave (see Fig.~\ref{fig:Traj}, right panel), while the gap and speed standard deviations converge to a limit cycle (see Fig.~\ref{fig:Traj}, blue curves).

\begin{figure}[!ht]
    \centering
    \includegraphics{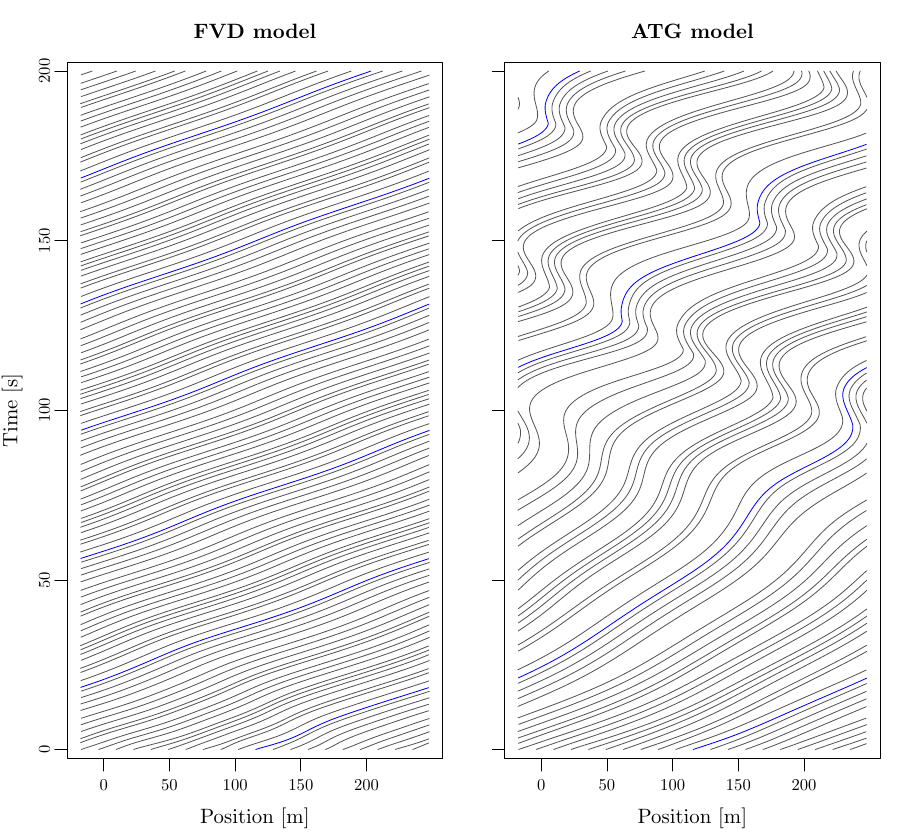}
    \caption{Trajectories of 20 vehicles on a segment of length $L=230$~m with periodic boundaries from uniform initial conditions. 
    Left panel:
    Linear FVD model \eqref{eq:FVDnum} with additive heterogeneity (stable). 
    Right panel: 
    nonlinear ATG  model \eqref{eq:ATGnum} with additive heterogeneity (unstable). 
    The FVD model remains stable even even in the presence of acceleration biases and converges to an equilibrium with heterogeneous gaps, while the biases perturb the stability of the ATG model, which converges to a limit cycle with a stop-and-go wave propagating downstream.}
    \label{fig:Traj}
\end{figure}

\begin{figure}[!ht]
    \centering\bigskip
    \includegraphics{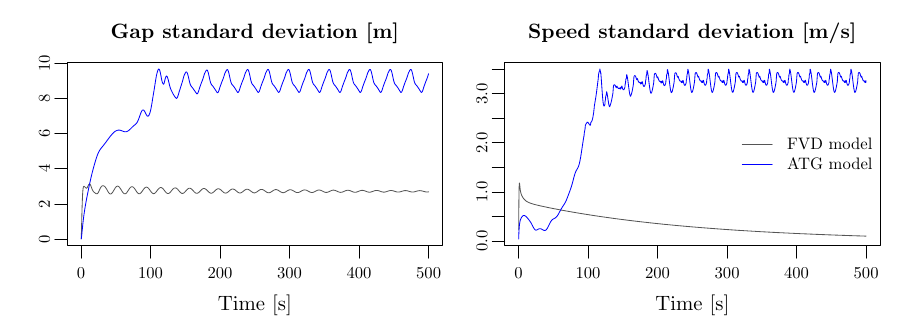}
    \caption{Time evolution of the empirical speed (left panel) and gap (right panel) standard deviations for the two scenarios shown in Fig.~\protect\ref{fig:Traj}. The gap standard deviation converges and the speed standard deviation tends to zero for the stable FVD model (grey curves), while the system converges to a limit cycle with a stop-and-go wave for the unstable ATG model (blue curves).}
    \label{fig:SD}
\end{figure}

\section{Conclusion}  \label{sec:Ccl}
The results presented in this paper show that simple static heterogeneity mechanisms in driver behaviour (quenched disorder), here an individual additive bias in acceleration, see \eqref{eq:HCF1}, can affect the stability of nonlinear models and lead to the emergence of stop-and-go waves. 
Such a result is specific to nonlinear models, as the stability of linear models is inherently robust to additive acceleration biases. 
However, scalar heterogeneity models \eqref{eq:HCF2} can affect the stability of both linear and nonlinear models, see \eqref{eq:SC1}. 
In general, we can conclude that static heterogeneity in driver behaviour can lead to traffic instability and stop-and-go dynamics.
However, other factors can also affect the stability of a line of vehicles, such as control delay, stochastic noise or, again, limited acceleration capacity. 
Stable car-following models and adaptive cruise controllers should consider these factors in a unified framework.


\section*{Declarations}



\subsection*{CRediT author statement}

\textit{M. Ehrhardt:} 
Validation, 
Formal analysis, 
Investigation, 
Writing -- Original Draft, 
Writing -- Review \& Editing, 
Project administration.
\textit{A. Tordeux:}
Conceptualisation, 
Methodology, 
Validation, 
Formal analysis, 
Investigation,  
Writing -- Original Draft, 
Writing -- Review \& Editing,  
Software.





\subsection*{Data availability} 

All information analyzed or generated, which would support the results of this work are available in this article.
No data was used for the research described in the article.

\subsection*{Conflict of interest} 
The authors declare that there are no problems or conflicts 
of interest between them that may affect the study in this paper.

\subsection*{Acknowledgements} 
The authors thank Oscar Dufour, Alexandre Nicolas, and David Rodney for fruitful discussions that initiated the work presented in this paper.


\bibliographystyle{acm}
\bibliography{refs}

\end{document}